\documentclass[letterpaper, 10 pt, conference]{ieeeconf}  

\IEEEoverridecommandlockouts                              

\usepackage{float} 

\usepackage[normalem]{ulem}
\usepackage{graphicx}
\usepackage{booktabs}

\usepackage{subcaption}
\usepackage{cancel}
\usepackage[dvipsnames]{xcolor}
\usepackage{epsfig,epsf,epstopdf}
\usepackage[cmex10]{amsmath}
\usepackage{amsfonts}
\usepackage{amssymb}
\usepackage{cite}
\usepackage{paralist}
\usepackage{array}

\usepackage{mathrsfs}
\usepackage{multirow}
\usepackage{caption}

\usepackage{nomencl}

\newtheorem{theorem}{Theorem}
\newtheorem{problem}{Problem}
\newtheorem{assumption}{Assumption}
\newtheorem{definition}{Definition}
\newtheorem{remark}{Remark}
\newtheorem{lemma}{Lemma}

\usepackage[colorlinks=true,linkcolor=blue,citecolor=blue,urlcolor=blue]{hyperref}

\makenomenclature
\begin{document}
\title{Event-Driven Safe and Resilient Control of Automated and Human-Driven
Vehicles under EU-FDI Attacks}

\author{
Yi Zhang, Yichao Wang, Wei Xiao, Mohamadamin~Rajabinezhad, and Shan Zuo%
\thanks{Y. Zhang, Y. Wang, M.~Rajabinezhad, and S. Zuo are with the Department of Electrical and Computer Engineering, University of Connecticut, Storrs, CT 06269, USA (emails: yi.2.zhang@uconn.edu; yichao.wang@uconn.edu; mohamadamin.rajabinezhad@uconn.edu; shan.zuo@uconn.edu). W. Xiao is with the Department of Robotics Engineering, Worcester Polytechnic Institute, Worcester, MA 01609, USA (email: wxiao3@wpi.edu).}%
}

\maketitle
\thispagestyle{empty}
\pagestyle{empty} 

\begin{abstract}
This paper studies the safe and resilient control of Connected and Automated Vehicles (CAVs) operating in mixed traffic environments where they must interact with Human-Driven Vehicles (HDVs) under uncertain dynamics and exponentially unbounded false data injection (EU-FDI) attacks. These attacks pose serious threats to safety-critical applications. While resilient control strategies can mitigate adversarial effects, they often overlook collision avoidance requirements. Conversely, safety-critical approaches tend to assume nominal operating conditions and lack resilience to adversarial inputs. To address these challenges, we propose an event-driven safe and resilient (EDSR) control framework that integrates event-driven Control Barrier Functions (CBFs) and Control Lyapunov Functions (CLFs) with adaptive attack-resilient control. The framework further incorporates data-driven estimation of HDV behaviors to ensure safety and resilience against EU-FDI attacks. Specifically, we focus on the lane-changing maneuver of CAVs in the presence of unpredictable HDVs and EU-FDI attacks on acceleration inputs. The event-driven approach reduces computational load while maintaining real-time safety guarantees. Simulation results, including comparisons with conventional safety-critical control methods that lack resilience, validate the effectiveness and robustness of the proposed EDSR framework in achieving collision-free maneuvers, stable velocity regulation, and resilient operation under adversarial conditions.
\end{abstract}

\section{INTRODUCTION}
In the context of CAVs, false data injection (FDI) attacks pose a growing threat to safety-critical applications \cite{weng2023secure}. These attacks can significantly disrupt cooperative control mechanisms by corrupting sensory or communicated data, ultimately leading to unsafe driving behaviors or even collisions. As vehicle systems increasingly rely on communication and automation, the vulnerability to such adversarial interventions becomes more pronounced. Traditional mitigation strategies often focus on detecting and removing compromised agents to eliminate the influence of malicious data \cite{zhang2021co,pasqualetti2013attack}. However, these approaches generally depend on restrictive assumptions, such as knowing an upper bound on the number of affected entities. In realistic and dynamic driving environments, especially in mixed traffic, such assumptions may be too strong and impractical to enforce.

To overcome these challenges, recent research has shifted toward developing resilient control strategies that maintain performance without explicitly identifying compromised components \cite{zuo2023resilient,zuo2022adaptive,zhang2024resilient,sun2023resilience,zuo2022resilient,rajabinezhad2025distributed,zuo2024data,zhang2025privacy,zhang2026observer,wang2025cyber,zhang2024distributed}. These methods aim to minimize the impact of adversarial actions by enhancing the intrinsic resilience of the control protocols. Nonetheless, many of these efforts overlook safety-critical constraints, such as collision avoidance, which are paramount in dense or high-speed traffic scenarios. Conversely, control strategies that emphasize safety through constrained optimization—such as using CBFs—often lack resilience to adversarial influence and may assume benign operational conditions \cite{garg2022fixed,cohen2024constructive,wu2023quadratic,cohen2023characterizing,tan2024undesired}.

This paper addresses the challenge of designing control strategies for CAVs that jointly ensure safety and resilience, particularly in the presence of EU-FDI attacks. Our focus is on guaranteeing collision avoidance even when the system is subjected to adversarial inputs. To this end, we investigate safe interactions between CAVs and HDVs in a mixed-traffic environment, where HDV dynamics and driver intentions are uncertain, nonlinear, and generally unknown.

To address this challenge, we revisit the event-driven CBF framework in \cite{xiao2022event} and redesign it to remain valid under EU-FDI attacks. Existing event-driven CBF methods rely on the premise that the dynamics used in barrier construction sufficiently match the true system evolution under bounded uncertainties, which guarantees meaningful Lie-derivative constraints and maintains quadratic program (QP) feasibility. This premise fundamentally breaks down under EU-FDI attacks. Exponentially unbounded adversarial inputs injected into the acceleration channel induce a rapidly growing mismatch between the modeled and actual longitudinal dynamics, which directly distorts the evolution of safety constraints. As a result, the CBF conditions may no longer certify forward invariance of the safe set, and the event-driven QP can become infeasible even though it remains valid under nominal or bounded disturbances. This distinction highlights a fundamentally different research problem from prior resilience studies. While existing EU-FDI works mainly investigate stability or consensus degradation under adversarial inputs with the control structure remaining intact, the problem addressed here concerns the loss of safety feasibility itself, where adversarial injections invalidate the safety-constrained optimization formulation. Therefore, this work does not combine existing methods but instead resolves a previously unaddressed failure mode of event-driven safety-critical control under EU-FDI attacks. The main contributions of this paper are summarized as follows:

\begin{itemize}
    \item To the best of our knowledge, this paper is the first to reveal and resolve the loss of validity of event-driven CBF-based safety-critical control under EU-FDI attacks by redesigning the safety mechanism to operate under adversarial injections. This unique combination ensures safety and resilience for CAVs interacting with HDVs in uncertain mixed traffic environments.
    
    \item The proposed EDSR framework effectively addresses EU-FDI attacks, providing provable collision avoidance and stable operation guarantees. Compared to the conventional safety-critical control method in \cite{xiao2022event}, simulation results demonstrate that alternative approaches fail to mitigate adversarial impacts, while the EDSR framework's robustness and effectiveness are clearly validated under these conditions.
\end{itemize}

\section{PROBLEM FORMULATION}
Figure~\ref{fig:1} illustrates a mixed-traffic lane-changing scenario.
Vehicles $A$ and $B$ are cooperative CAVs, $H$ is an uncontrollable HDV,
and $U$ is a slower vehicle ahead.
Vehicle $B$ initiates a lane change to overtake $U$. We focus on a high-speed lane-changing interaction between CAVs and HDVs,
which represents a worst-case conflict topology where longitudinal safety,
lateral maneuvering, and human unpredictability coexist.
The framework is general and extends to other conflict scenarios. The objectives are to minimize maneuver time and control effort,
limit traffic disruption, ensure safety under unpredictable HDV behavior,
and guarantee stability and resilience under malicious FDI attacks.
 
\begin{figure}[H]
\includegraphics[width =0.5 \textwidth]{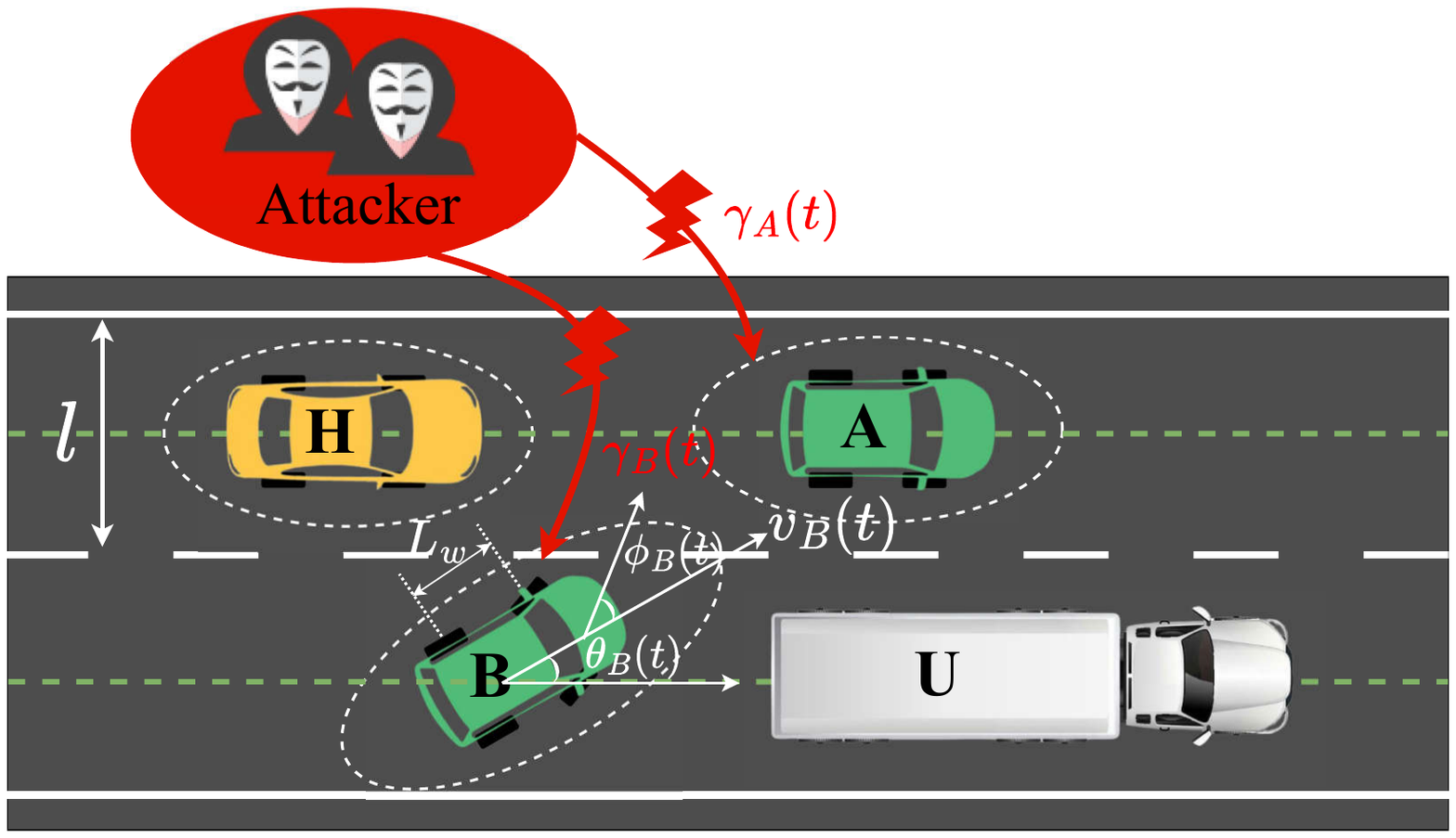}
\caption{Illustration of a basic lane-changing maneuver: the yellow vehicle is an HDV, green vehicles are CAVs, and the grey vehicle is a slow-moving, uncontrollable vehicle.}
\label{fig:1}
\end{figure}

In this scenario, the dynamics and control policy of the HDV are assumed to be unknown. The slow vehicle $U$ is assumed to continue moving in the slow lane at a constant speed $v_{U}$. For each CAV, indexed by $i \in \{A, B\}$ in Figure~\ref{fig:1}, the dynamics are defined according to the model in \cite{he2021rule} as follows:
\begin{equation}
\label{eq1}
\underbrace{
\begin{bmatrix}
\dot{x}_i(t)  \\
\dot{y}_i(t) \\
\dot{\theta}_i(t) \\
\dot{v}_i(t)    
\end{bmatrix}
}_{\dot{\boldsymbol{x}}_i(t)}=\underbrace{\begin{bmatrix}v_i(t) \cos \theta_i(t) \\
v_i(t) \sin \theta_i(t) \\
0 \\
0\end{bmatrix}
}_{f\left(\boldsymbol{x}_i(t)\right)}+\underbrace{\begin{bmatrix}0 & -v_i(t) \sin \theta_i(t) \\
0 & v_i(t) \cos \theta_i(t) \\
0 & v_i(t) / L_{w} \\
1 & 0\end{bmatrix}}_{g\left(\boldsymbol{x}_i(t)\right)} \underbrace{\begin{bmatrix}u_i(t) \\
\phi_i(t)\end{bmatrix}}_{\boldsymbol{\xi}_i(t)},
\end{equation}
Here, $x_i(t)$, $y_i(t)$, $\theta_i(t)$, and $v_i(t)$ denote the longitudinal position, lateral position, heading angle, and speed, respectively. The control inputs $u_i(t)$ and $\phi_i(t)$ represent the acceleration and steering angle.  
Here, $L_w$ denotes the wheelbase length of the vehicle, which affects the rate of change of the heading angle $\theta_i(t)$ during steering maneuvers.

Consider the following malicious FDI attacks targeting the acceleration control input of each CAV:
\begin{equation}
\label{eq5}
    \dot{v}_i(t) = \bar{u}_i(t) = u_i(t)+\gamma_i(t), \;i\in\{A,B\},
\end{equation}
where \( \bar u_i(t) \) is the corrupted control input. $\gamma_i(t)$ is the EU-FDI attack, which satisfies the following assumption.

\begin{assumption}
\label{ass:1}
$\gamma_i(t)$ is any FDI attacks with the worst case of being an exponentially growing signal and takes the general form of $\eta_i(t) \exp(\kappa_i(t) t)$. We assume $\left|\eta_i(t)\right|\leq \bar{\eta}$ and $\left|\kappa_i(t)\right| \leq \bar{\kappa}$, where $\bar{\eta}$ and $\bar{\kappa}$ are positive constants.  
\end{assumption}

The exponential form captures the fastest transient deviation that an adversarial input can induce within a short time window, representing a worst-case actuator corruption scenario. The maneuver starts at $t_0$ and ends at $t_f$, when vehicle $B$ completes the lane change.
During this period, the CAV inputs and velocity are constrained by
\begin{equation}
\label{eq2}
\begin{aligned}
u_{i,\min} \le u_i(t) \le u_{i,\max}, \quad
\phi_{i,\min} \le \phi_i(t) \le \phi_{i,\max}, \\
v_{i,\min} \le v_i(t) \le v_{i,\max}, \quad i \in \{A,B\}.
\end{aligned}
\end{equation}

Let $l$ denote the lane width and set the slow-lane centerline as $y=0$.
The lateral position satisfies
\begin{equation}
\label{eq3}
-\frac{l}{2} \le y_i(t) \le \frac{3}{2}l, \quad i \in \{A,B\}.
\end{equation} 

To ensure inter-vehicle safety, an ellipsoidal safe region is defined as
\begin{equation}
\label{eq4}
\begin{aligned}
b_{i,j}(\boldsymbol{x}_i,\boldsymbol{x}_j)
= \frac{(x_j-x_i)^2}{(a_i v_i)^2}
+ \frac{(y_j-y_i)^2}{(b_i v_i)^2}
-1 \ge 0,
\end{aligned}
\end{equation}
where $a_i$ and $b_i$ denote the longitudinal and lateral safety margins, respectively, with $i \in \{A,B\}$ and $j \in \{A,B,U,H\}$, $i \neq j$. The ellipsoidal form is differentiable and facilitates CBF construction, and \eqref{eq2} ensures $v_i(t)\geq v_{i,\min}>0$, so that \eqref{eq4} is well-defined.
\begin{definition}[\cite{khalil2002nonlinear}]
\label{def:1}
$x(t) \in \mathbb{R}$ is uniformly ultimately bounded (UUB) with the ultimate bound $b$, if there exist constants $b, c>0$, independent of $t_{0} \geq 0$, and for every $a \in(0, c)$, there exists $t_{1}=t_{1}(a, b) \geq 0$, independent of $t_{0}$, such that if $\left|x\left(t_{0}\right)\right| \leq a $, then $|x(t)| \leq b, \quad \forall t \geq t_{0}+t_{1}.$    
\end{definition}

Now, we introduce the safe and attack-resilient (SAR) control problem.
\begin{problem}
\label{prb:1}
Design control inputs $u_i(t)$ such that: 1) The lane-change maneuver minimizes control effort and terminal deviation;
2) Safety constraints $b_{i,j}(\cdot) \ge 0$ are satisfied for all $t$;
3) The speed regulation error $\varepsilon_i(t)$ is UUB under EU-FDI. This yields the following optimal control problem:
\begin{flalign}
\label{eq8}
\begin{split}
&\min_{u_A(t),u_B(t),t_f}  \int_{t_0}^{t_f} \frac{\alpha_u}{2} \left(u_{A}^{2}(t) + u_{B}^2(t)\right) dt  + \alpha_l \left(y_{B}(t_f) - l\right)^2
\\&+ \alpha_v \left[ \left(v_{A}(t_f) - v_d\right)^2 + \left(v_{B}(t_f) - v_d\right)^2 \right], \\
&\quad\quad\quad\quad\quad\text{s.t.} \quad  \eqref{eq1},   \eqref{eq2}, \eqref{eq3}, \eqref{eq4},
\end{split}&&\raisetag{2.5\baselineskip}
\end{flalign}
where $v_d$ is the desired cruising speed in the fast lane, and $l$ is the centerline of the target lane. $\alpha_u$, $\alpha_l$, and $\alpha_v$ are non-negative weights that represent the trade-offs between energy consumption, lateral deviation from the target lane, and velocity tracking performance, respectively.
\end{problem}
\begin{remark}
\label{rem:3}
The proposed EDSR framework integrates event-driven CBFs, CLFs,
and adaptive attack compensation to ensure safety and resilience.
Unlike \cite{xiao2022event}, EU-FDI corrupts the safety constraint dynamics
and may destroy QP feasibility, which necessitates a restructuring
of the event-driven CBF mechanism.
The ellipsoidal safe region in~\eqref{eq4}
provides a speed-dependent safety representation,
while the event-driven implementation reduces computational burden.
\end{remark}

\section{HUMAN-IN-THE-LOOP SAFE LANE-CHANGING MANEUVER}
In this section, an attack-resilient event-driven safe control framework is proposed, which incorporates an attack-resilient controller augmented with event-driven CBFs and CLFs, to address the SAR control problem defined in Problem \ref{prb:1}. To enforce safety under adversarial conditions, the original state constraints in Problem \ref{prb:1} are replaced by CBF-based constraints. Specifically, for any state constraint of the form $b(x) \geq 0$, a corresponding CBF condition is introduced. \begin{lemma}[ \cite{xiao2022event}]
\label{lem:1}
Let $b(\boldsymbol{x}(t))$ be a continuously differentiable function representing a state constraint such that the safe set is defined as $\mathcal{C} = \{ \boldsymbol{x}(t) \in \mathbb{R}^n \mid b(\boldsymbol{x}(t)) \geq 0 \}$. Assume that $L_g b(\boldsymbol{x}(t)) \neq 0$ when $b(\boldsymbol{x}(t)) = 0$. Then the original constraint $b(\boldsymbol{x}(t)) \geq 0$ is implied by the following condition:
\begin{equation}
\label{eq9}
\begin{aligned}
\sup_{\boldsymbol{\xi}(t) \in \Xi}& \left[ L_f b(\boldsymbol{x}(t)) + L_g b(\boldsymbol{x}(t)) \boldsymbol{\xi}(t) + \alpha(b(\boldsymbol{x}(t))) \right] \geq 0,  
\\&\forall \boldsymbol{x}(t) \in \mathcal{C},
\end{aligned}
\end{equation}
where $L_f(\cdot)$ and $L_g(\cdot)$ denote the Lie derivatives of $b(\boldsymbol{x}(t))$ along the system dynamics $f(\cdot)$ and input matrix $g(\cdot)$, respectively, and $\alpha(\cdot)$ is an extended class $\mathcal{K}$ function. This condition is affine in the control input and ensures forward invariance of the safe set $\mathcal{C}$.
\end{lemma}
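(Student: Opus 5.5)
The plan is the standard comparison-lemma argument for zeroing CBFs. Condition \eqref{eq9} guarantees that for every state in $\mathcal{C}$ the admissible set of inputs
\[
K(\boldsymbol{x}) = \{\boldsymbol{\xi}\in\Xi : L_f b(\boldsymbol{x})+L_g b(\boldsymbol{x})\boldsymbol{\xi}+\alpha(b(\boldsymbol{x}))\ge 0\}
\]
is nonempty. So we fix any locally Lipschitz feedback $\boldsymbol{\xi}(t)\in K(\boldsymbol{x}(t))$, for instance the QP solution used later. We also take the nominal dynamics $\dot{\boldsymbol{x}}=f(\boldsymbol{x})+g(\boldsymbol{x})\boldsymbol{\xi}$, which is the setting of the lemma as cited from \cite{xiao2022event}. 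Along the resulting closed-loop trajectory, $b$ is continuously differentiable in $t$, and the chain rule gives
\[
\dot b(\boldsymbol{x}(t)) = L_f b(\boldsymbol{x}(t))+L_g b(\boldsymbol{x}(t))\boldsymbol{\xi}(t)\ge -\alpha(b(\boldsymbol{x}(t))).
\]

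Given this differential inequality, invariance follows by contradiction. Suppose $b(\boldsymbol{x}(t_0))\ge 0$ but $b(\boldsymbol{x}(t_1))<0$ for some $t_1>t_0$. Let $\tau=\sup\{t\in[t_0,t_1]: b(\boldsymbol{x}(t))\ge 0\}$. By continuity, $b(\boldsymbol{x}(\tau))=0$ and $b<0$ on $(\tau,t_1]$.

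On that interval, apply the comparison lemma to the scalar system $\dot z=-\alpha(z)$ with $z(\tau)=0$. Because $\alpha$ is extended class $\mathcal{K}$, $\alpha(0)=0$, so the unique solution is $z\equiv 0$; uniqueness holds if $\alpha$ is locally Lipschitz, which we assume. The comparison lemma then gives $b(\boldsymbol{x}(t))\ge z(t)=0$ on $[\tau,t_1]$, which contradicts $b(\boldsymbol{x}(t_1))<0$.

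The assumption $L_g b\neq 0$ on $\{b=0\}$ serves to make \eqref{eq9} attainable at the boundary: when $L_g b\ne 0$, the input can push $\dot b$ to be nonnegative there, given enough input authority in $\Xi$. It is also what makes the closed-loop QP feedback well defined and Lipschitz near $\partial\mathcal{C}$, so that trajectories exist and are unique.

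I expect the main obstacle to be the regularity needed for the comparison step. If $\alpha$ is merely continuous, solutions of $\dot z=-\alpha(z)$ from $0$ may not be unique. I would handle this with a direct argument instead. Near $\tau$ we have $b<0$ and hence $\alpha(b)<0$, so $\dot b\ge -\alpha(b)>0$ on $(\tau,t_1]$. Then $b$ is strictly increasing there, so $b(t_1)>b(\tau)=0$, the required contradiction, and it needs only continuity and monotonicity of $\alpha$. The remaining requirement is existence of a sufficiently regular selection $\boldsymbol{\xi}(t)\in K(\boldsymbol{x}(t))$, which I would assume as in \cite{xiao2022event}.
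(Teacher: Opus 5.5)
The paper does not prove this lemma; it imports it from \cite{xiao2022event}, where it rests on the usual comparison-lemma/Nagumo reasoning. So your strategy is the expected one. Your contradiction step, including the monotonicity variant (which needs only $\alpha(s)<0$ for $s<0$), is valid \emph{provided} $\dot b\ge-\alpha(b)$ holds on $(\tau,t_1]$. That proviso is the gap. Hypothesis \eqref{eq9} is asserted only for $\boldsymbol{x}\in\mathcal{C}$. So $K(\boldsymbol{x})$ is only known to be nonempty, and your selection $\boldsymbol{\xi}(t)\in K(\boldsymbol{x}(t))$ only has meaning, while the state lies in $\mathcal{C}$. On $(\tau,t_1]$ you have $b<0$, i.e.\ $\boldsymbol{x}(t)\notin\mathcal{C}$, so invoking the differential inequality there presupposes what you are trying to prove. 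The real obstacle is the domain on which the CBF inequality is available, not the regularity of $\alpha$.

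There are two standard repairs.

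(i) Strengthen the hypothesis to the usual zeroing-CBF form. Require \eqref{eq9} on an open set $D\supset\mathcal{C}$ and select $\boldsymbol{\xi}$ from $K(\boldsymbol{x})$ on $D$. Alternatively, as in a QP implementation that imposes the constraint at every time regardless of whether $\boldsymbol{x}\in\mathcal{C}$, assume feasibility along the whole trajectory. Since $\boldsymbol{x}(\tau)\in\mathcal{C}\subset D$ and $D$ is open, the trajectory stays in $D$ on some $[\tau,\tau+\epsilon]$. Replacing $t_1$ by $\min\{t_1,\tau+\epsilon\}$, your monotonicity argument then goes through verbatim.

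(ii) Keep the hypothesis on $\mathcal{C}$ only and use Nagumo's theorem. This is where the assumption $L_g b\neq 0$ on $\{b=0\}$ genuinely enters: it forces $\nabla b\neq 0$ there. Hence the Bouligand tangent cone of $\mathcal{C}$ at a boundary point is $\{v:\nabla b\cdot v\ge 0\}$. Evaluating \eqref{eq9} at $b=0$, with $\alpha(0)=0$, says exactly that $f+g\boldsymbol{\xi}$ lies in this cone. Extend $\boldsymbol{\xi}$ to a locally Lipschitz feedback on a neighborhood of $\mathcal{C}$; this is needed for uniqueness, since boundary information alone fails without it (e.g.\ $\dot z=-\sqrt{|z|}$ from $z=0$ admits $z=-t^2/4$). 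Nagumo's theorem then gives forward invariance.

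Two smaller points. Passing from the supremum in \eqref{eq9} to $K(\boldsymbol{x})\neq\emptyset$ requires the supremum to be attained, e.g.\ $\Xi$ compact as with the box constraints \eqref{eq2}. Also, $L_g b\neq 0$ does not by itself make \eqref{eq9} attainable under bounded $\Xi$; \eqref{eq9} is an assumption, and the hypothesis's actual role is the boundary regularity used in (ii).
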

\begin{remark}
\label{rem:4}
Solving the SAR control problem \ref{prb:1} with event-driven CBFs requires accurately estimating the unknown dynamics of HDVs. With these HDV state estimates, the CAVs then enforce safety by transforming original constraints into corresponding CBF constraints at every control update. By doing so, the original SAR control problem can be recast into a series of QPs, similar in form to those discussed in \cite{ames2014control}. An event-driven method further enhances efficiency by adaptively determining the appropriate triggering moments for solving these QPs.
\end{remark}

\subsection{Event-Driven Safety Control}

To approximate the unknown behavior of HDVs, adaptive dynamics are introduced based on real-time measurements. The HDV is modeled using the following adaptive nonlinear system:
\begin{equation}
\label{eq10}
\dot{\overline{\boldsymbol{x}}}_{H}(t) = f_{a}(\overline{\boldsymbol{x}}_{H}(t)) + g_{a}(\overline{\boldsymbol{x}}_{H}(t)) \boldsymbol{u}_{H}(t),
\end{equation}
where $f_{a}(\cdot): \mathbb{R}^{n} \rightarrow \mathbb{R}$ and $g_{a}(\cdot): \mathbb{R}^{n} \rightarrow \mathbb{R}^{n \times q}$ are adaptive functions designed to approximate the unknown dynamics of the HDV, $\boldsymbol{u}_{H}(t) \in \mathbb{R}^{q}$ is the HDV control input, and $\overline{\boldsymbol{x}}_{H}(t) \in X \subset \mathbb{R}^{n}$ is the estimated HDV state vector corresponding to the true state $\boldsymbol{x}_{H}(t)$.

Let the estimation error be defined as $\boldsymbol{e}(t) = \boldsymbol{x}_{H}(t) - \overline{\boldsymbol{x}}_{H}(t)$. For any safety constraint $b_{i,H}(\boldsymbol{x}_{i}(t), \boldsymbol{x}_{H}(t))$ involving a CAV $i$ and the HDV, the function can be re-expressed using the estimated state and the error:
\begin{equation}
\label{eq11}
b_{i,H}(\boldsymbol{x}_{i}(t), \boldsymbol{x}_{H}(t)) = b_{i,H}(\boldsymbol{x}_{i}(t), \overline{\boldsymbol{x}}_{H}(t) + \boldsymbol{e}(t)) \geq 0.
\end{equation}

To enforce this constraint through a control barrier function (CBF), the following condition is constructed:
\begin{equation}
\label{eq12}
\begin{aligned}
& \frac{\partial b_{i,H}(\boldsymbol{x}_{i}(t), \overline{\boldsymbol{x}}_{H}(t) + \boldsymbol{e}(t))}{\partial \boldsymbol{x}_{i}} \dot{\boldsymbol{x}}_{i}(t) 
\\&+ \frac{\partial b_{i,H}(\boldsymbol{x}_{i}(t), \overline{\boldsymbol{x}}_{H}(t) + \boldsymbol{e}(t))}{\partial \overline{\boldsymbol{x}}_{H}(t)} \dot{\overline{\boldsymbol{x}}}_{H}(t) \\
& + \frac{\partial b_{i,H}(\boldsymbol{x}_{i}(t), \overline{\boldsymbol{x}}_{H}(t) + \boldsymbol{e}(t))}{\partial \boldsymbol{e}(t)} \dot{\boldsymbol{e}}(t) 
\\&+ \alpha\left(b_{i,H}(\boldsymbol{x}_{i}(t), \overline{\boldsymbol{x}}_{H}(t) + \boldsymbol{e}(t))\right) \geq 0,
\end{aligned}
\end{equation}
where $\dot{\boldsymbol{x}}_{i}(t)$ is governed by the system dynamics in \eqref{eq1}, and $\dot{\overline{\boldsymbol{x}}}_{H}(t)$ is defined in \eqref{eq10}. Since $\boldsymbol{e}(t)$ and $\dot{\boldsymbol{e}}(t)$ can be directly evaluated from real-time sensor measurements, all terms in \eqref{eq12} are computable. Thus, satisfaction of \eqref{eq12} ensures that the original safety constraint $b_{i,H}(\boldsymbol{x}_{i}(t), \boldsymbol{x}_{H}(t)) \geq 0$ holds, even though the true HDV dynamics are unknown, as established in \cite{xiao2022event}.

To adaptively refine the HDV dynamics model and reduce unnecessary conservativeness, the function $f_{a}(\cdot)$ in \eqref{eq10} is updated at discrete time instants $t_k$, for $k = 1, 2, \ldots$, according to:
\begin{equation}
\label{eq13}
f_{a}(\overline{\boldsymbol{x}}_{H}(t_k^+)) = f_{a}(\overline{\boldsymbol{x}}_{H}(t_k^-)) + \dot{\boldsymbol{e}}(t_k),
\end{equation}
where $t_k^-$ and $t_k^+$ represent the time instants immediately before and after $t_k$. By resetting the estimated state to the measured state at each update, i.e., $\overline{\boldsymbol{x}}_{H}(t_k) = \boldsymbol{x}_{H}(t_k)$, the error satisfies $\boldsymbol{e}(t_k) = \mathbf{0}$ and $\dot{\boldsymbol{e}}(t_k^+) \approx \mathbf{0}$. This approach improves control performance by reducing the frequency of event-driven QP updates and limiting unnecessary conservativeness in the controller.

In a lane-changing scenario, the dynamics of the HDV traveling in the fast lane are modeled adaptively to account for unknown behavior. The estimated HDV dynamics are defined as:
\begin{equation}
\label{eq14}
\underbrace{\begin{bmatrix}\dot{\bar{x}}_{H}(t) \\
\dot{\bar{y}}_{H}(t) \\
\dot{\bar{\theta}}_{H}(t) \\
\dot{\bar{v}}_{H}(t)\end{bmatrix}}_{\dot{\bar{\boldsymbol{x}}}_{H}(t)} = \underbrace{\begin{bmatrix}
\bar{v}_{H}(t) \cos \bar{\theta}_{H}(t) + h_{x}(t) \\
\bar{v}_{H}(t) \sin \bar{\theta}_{H}(t) + h_{y}(t) \\
\bar{v}_{H}(t) / L_{w} + h_{\theta}(t) \\
h_{v}(t)
\end{bmatrix}}_{f_{a}(\overline{\boldsymbol{x}}_{H}(t))},
\end{equation}
where $\overline{\boldsymbol{x}}_{H}(t) = [\bar{x}_{H}(t), \bar{y}_{H}(t), \bar{\theta}_{H}(t), \bar{v}_{H}(t)]^\top$ represents the estimated longitudinal and lateral positions, heading angle, and velocity of the HDV. The terms $h_{x}(t), h_{y}(t), h_{\theta}(t), h_{v}(t) \in \mathbb{R}$ are adaptive components used to approximate unknown variations in HDV behavior. These are initialized as zero at the start time: $h_{j}(t_0) = 0$ for $j \in \{x, y, \theta, v\}$.
 
To ensure that the constraints in the SAR control problem—namely, the speed and control constraints \eqref{eq2}, position constraints \eqref{eq3}, and safety constraints \eqref{eq4}—are satisfied at all times, CBFs are employed. Each of these constraints can be expressed as $b_n(\boldsymbol{x}(t)) \geq 0$ for $n \in \{1, 2, \ldots, N\}$, where $\boldsymbol{x}(t)$ includes all relevant vehicle states, i.e., $\boldsymbol{x}(t) = \{\boldsymbol{x}_{A}(t), \boldsymbol{x}_{B}(t), \boldsymbol{x}_{H}(t), \boldsymbol{x}_{U}(t)\}$. The corresponding CBF condition for each constraint takes the general form:
\begin{equation}
\label{eq15}
L_{f} b_n(\boldsymbol{x}(t)) + L_{g} b_n(\boldsymbol{x}(t)) \boldsymbol{\xi}(t) + \alpha_n(b_n(\boldsymbol{x}(t))) \geq 0,
\end{equation}
where $\boldsymbol{\xi}(t) = \{\boldsymbol{\xi}_{A}(t), \boldsymbol{\xi}_{B}(t)\}$.
  
In addition to CBFs that enforce hard safety constraints, CLFs are introduced to encode soft performance objectives related to lane-keeping and velocity tracking. Based on the terminal cost in \eqref{eq8}, the following CLFs are defined:
\begin{align*}
&V_{1}(\boldsymbol{x}_{B}(t)) = (v_{B}(t) - v_d)^2,  
V_2(\boldsymbol{x}_{A}(t)) = (v_{A}(t) - v_d)^2, \\
&V_3(\boldsymbol{x}_{B}(t)) = (y_{B}(t) - l)^2, 
 V_4(\boldsymbol{x}_{A}(t)) = (y_{A}(t) - l)^2.
\end{align*}

The corresponding CLF constraints are imposed separately on the longitudinal and lateral control channels, yielding
\begin{equation}
\label{eq16}
\begin{aligned}
L_f V_j(\boldsymbol{x}) + L_g V_j(\boldsymbol{x}) u_i^s + c_3 V_j(\boldsymbol{x}) &\leq \delta_j, \\
L_f V_j(\boldsymbol{x}) + L_g V_j(\boldsymbol{x}) \phi_i + c_3 V_j(\boldsymbol{x}) &\leq \delta_j,
\end{aligned}
\end{equation}
where the first inequality corresponds to the longitudinal channel (acceleration control), and the second corresponds to the lateral channel (steering control), with $i\in\{A,B\}$ and $j\in\{1,2,3,4\}$. Here, $u_i^s(t)$ denotes the nominal safety control input and the variables $\delta_j$ are relaxation terms introduced to preserve QP feasibility.

\begin{lemma}[\cite{li2024safe}]
\label{lem:2}
Consider a system of CAVs interacting with HDVs whose dynamics are unknown but estimated online. Let the safety constraints be encoded by CBFs $b_n(\boldsymbol{x})$ with Lie derivatives $L_f b_n, L_g b_n$. Then, the QP enforcing safety can be solved at a sequence of event-driven times $\{t_k\}_{k=0}^\infty$, rather than fixed intervals, if the following conditions are satisfied:

\noindent 1) The HDV state estimation error $\boldsymbol{e}(t)$ and its derivative $\dot{\boldsymbol{e}}(t)$ are bounded as:
\begin{subequations}
\label{eq18}
\begin{flalign}
\label{eq18a}|e_x(t)| &\leq w_x,  |e_y(t)| \leq w_y,  |e_\theta(t)| \leq w_\theta,  |e_v(t)| \leq w_v,&&\raisetag{0.7\baselineskip}  \\\label{eq18b}
    |\dot{e}_x(t)| &\leq \nu_x,  |\dot{e}_y(t)| \leq \nu_y,  |\dot{e}_\theta(t)| \leq \nu_\theta,  |\dot{e}_v(t)| \leq \nu_v.&&\raisetag{0.7\baselineskip} 
    \end{flalign}
    \end{subequations}
\noindent2) For each CBF constraint $b_n(\boldsymbol{x})$, the following robust condition must hold:
    \begin{equation}
    \label{eq19}
    L_{f_{\min}} b_n(t_k) + L_{g_{\min}} b_n(t_k) \boldsymbol{\xi}(t) \geq 0,
    \end{equation}
    where $L_{f_{\min}} b_n(t_k) = \min_{\boldsymbol{r} \in R(t_k)} \left[ L_f b_n(\boldsymbol{z}) + \alpha_n(b_n(\boldsymbol{z})) \right]$,
    and $L_{g_{\min}} b_n(t_k)$ is determined based on the control direction signs.

\noindent3) Control inputs $u_i^s(t_k)$ are obtained by solving the QP:
\begin{equation}
    \label{eq20}
    \begin{aligned}\min_{u_i^s(t_k), \delta_j(t_k)}& \sum_{i=A,B} \alpha_{u_i^s} \left(u_i^s\right)^2(t_k) + \sum_{j=1}^{4} p_j \delta_j^2(t_k)
    \\\text{s.t.} \quad &\eqref{eq1},\eqref{eq16},\eqref{eq19},
    \end{aligned}
    \end{equation}
where $\alpha_{u_i^s}$ and $p_j$ are design parameters that balance control effort and constraint relaxation.

In \eqref{eq18a}-\eqref{eq20}, the event-driven time $t_{k+1}, k=0,1,2, \ldots$ is given by
\begin{flalign}
\label{eq21}
\begin{aligned}
&t_{k+1}=\min \{t>t_k:\;
\exists j\in\{x,y,\theta,v\}\ \text{s.t.}\ |e_j(t)|\ge w_j, \\
&\text{or } \exists j\in\{x,y,\theta,v\}\ \text{s.t.}\ |\dot e_j(t)|\ge \nu_j,
\\&\text{or } |\boldsymbol{x}_i(t)-\boldsymbol{x}_i(t_k)|\ge \boldsymbol{s}_i, \text{or } |\overline{\boldsymbol{x}}_H(t)-\overline{\boldsymbol{x}}_H(t_k)|\ge \boldsymbol{s}_H \},
\end{aligned}
&&\raisetag{1.5\baselineskip}
\end{flalign}
where $i \in\{A,B\}$. $\boldsymbol{s}_{i} \in \mathbb{R}_{\geq 0}^{4}$ is a error bound vector.
\end{lemma}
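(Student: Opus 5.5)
The plan follows the standard event-driven CBF argument of \cite{xiao2022event}: between consecutive triggering instants the state stays in a known bounded region, and the CBF condition is enforced robustly over that whole region. Fix an event time $t_k$ and define the neighbourhood $R(t_k)$ of reachable configurations:
\begin{align*}
R(t_k)=\{\boldsymbol{z}=(\boldsymbol{y}_A,\boldsymbol{y}_B,\boldsymbol{y}_H,\boldsymbol{e},\dot{\boldsymbol{e}}):\;& |\boldsymbol{y}_i-\boldsymbol{x}_i(t_k)|\le \boldsymbol{s}_i,\ |\boldsymbol{y}_H-\overline{\boldsymbol{x}}_H(t_k)|\le \boldsymbol{s}_H,\\ & |e_j|\le w_j,\ |\dot e_j|\le \nu_j\}.
\end{align*}
The vector $\boldsymbol{r}$ in \eqref{eq19} ranges over $R(t_k)$. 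By the definition of $t_{k+1}$ in \eqref{eq21}, none of the listed bounds has been violated on $[t_k,t_{k+1})$. Hence the trajectory, the estimated HDV state, and the pair $(\boldsymbol{e},\dot{\boldsymbol{e}})$ satisfying \eqref{eq18a}--\eqref{eq18b} all lie in $R(t_k)$ on that interval. This is the first step. We also need $t_{k+1}>t_k$, i.e.\ no Zeno behaviour. This follows because at $t_k$ the reset gives $\boldsymbol{e}(t_k)=\mathbf{0}$ and $\dot{\boldsymbol{e}}(t_k^+)\approx\mathbf{0}$, and all trajectories are continuous with bounded derivatives under \eqref{eq2}. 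Consequently each threshold in \eqref{eq21} takes a positive time to reach.

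The second step is pointwise domination. For any $t\in[t_k,t_{k+1})$, write the left side of \eqref{eq12}/\eqref{eq15} at the true state as
\[
L_f b_n(\boldsymbol{z}(t))+\alpha_n(b_n(\boldsymbol{z}(t)))+L_g b_n(\boldsymbol{z}(t))\boldsymbol{\xi}(t_k).
\]
Here the input is held constant between events (zero-order hold). The drift part is bounded below by $L_{f_{\min}}b_n(t_k)$, since $\boldsymbol{z}(t)\in R(t_k)$. For the input part, each component of $L_g b_n$ is minimised over $R(t_k)$ sign-wise according to the sign of the corresponding entry of $\boldsymbol{\xi}(t_k)$. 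This is exactly how $L_{g_{\min}}b_n(t_k)$ is defined in condition 2. It gives $L_g b_n(\boldsymbol{z}(t))\boldsymbol{\xi}(t_k)\ge L_{g_{\min}}b_n(t_k)\boldsymbol{\xi}(t_k)$. Combining the two bounds with \eqref{eq19}, the CBF inequality holds for all $t\in[t_k,t_{k+1})$.

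The third step closes the argument. Lemma~\ref{lem:1}, in its HDV form \eqref{eq12}, now yields $\dot b_n\ge-\alpha_n(b_n)$ along the true trajectory on each inter-event interval. By the comparison lemma, $b_n(\boldsymbol{x}(t))\ge0$ is preserved on each interval, and concatenating the intervals gives invariance for all $t$. The update \eqref{eq13}, together with the state reset at $t_k$, does not affect the true state, so $b_n$ remains continuous across events. Finally, the QP \eqref{eq20} is solved only at the times $t_k$. Its constraints \eqref{eq19} are affine in $\boldsymbol{\xi}(t_k)$, and the CLF constraints \eqref{eq16} are softened by $\delta_j$, so the QP is well posed whenever \eqref{eq19} is feasible.

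The main obstacle is the sign-dependent definition of $L_{g_{\min}}b_n$. Because $\boldsymbol{\xi}(t_k)$ is a decision variable, its sign is not known in advance. I would first split the problem by the sign pattern of $\boldsymbol{\xi}$, solving the QP on each orthant with the corresponding minimised $L_g b_n$, and keep the best feasible solution. An alternative is to bound $L_g b_n\boldsymbol{\xi}$ by $\min_{R(t_k)}L_g b_n\,\boldsymbol{\xi}^+-\max_{R(t_k)}L_g b_n\,\boldsymbol{\xi}^-$, which stays affine after splitting $\boldsymbol{\xi}=\boldsymbol{\xi}^+-\boldsymbol{\xi}^-$. A secondary difficulty is computing the minima over the compact set $R(t_k)$. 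Here I would rely on continuity of $b_n$ and its Lie derivatives, which holds since $v_i\ge v_{i,\min}>0$ by \eqref{eq2}, together with interval or Lipschitz bounds.
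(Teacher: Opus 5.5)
Your argument is correct and is essentially the standard event-triggered CBF proof of \cite{xiao2022event} on which this lemma rests; the paper gives no proof of its own and imports the result from \cite{li2024safe}. As in that proof, the trigger \eqref{eq21} confines the inter-event configuration to $R(t_k)$, the drift and input terms are bounded below by their sign-wise worst cases over $R(t_k)$, the comparison lemma on each inter-event interval gives invariance, and your $\boldsymbol{\xi}^+-\boldsymbol{\xi}^-$ lifting is a sound way to keep \eqref{eq19} affine.

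The remaining loose ends are bookkeeping. First, $R(t_k)$ must also bound the state of $U$, which enters $b_{i,U}$ but is not monitored in \eqref{eq21} as stated; the simulations do give $U$ its own threshold. Second, $t_{k+1}>t_k$ alone does not rule out accumulating events, so covering all $t\ge t_0$ needs a uniform dwell time. For the $\boldsymbol{e}$-trigger, $w_j/\nu_j$ works. For the $\dot{\boldsymbol{e}}$-trigger, you need continuity of the true HDV velocity, so that \eqref{eq13} gives $\dot{\boldsymbol{e}}(t_k^+)=\mathbf{0}$ exactly rather than $\approx\mathbf{0}$, plus a bound on $\ddot{\boldsymbol{e}}$.
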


\begin{remark}
Although the triggering condition in \eqref{eq21} appears mathematically involved, 
all terms are directly measurable from onboard sensors, including the state estimation error, its derivative, and the state deviation from the last update instant. 
No additional optimization or model-based computation is required to evaluate the triggering rule. 
Therefore, the event-driven mechanism is computationally lightweight and fully implementable in real time. In implementation, time is discretized with a fixed sampling period $T_s$, and the control problem is solved at time instants $t_k = t_0 + kT_s$, for $k = 1,2,\ldots$. 
At each update instant, the SAR control problem \ref{prb:1} is reformulated as the QP \eqref{eq20}. This optimization ensures that the control input not only respects all safety constraints but also minimizes the deviation from the nominal control input. 
The solution is obtained by solving the QP in \cite{zhang2023novel}.
\end{remark}

\subsection{Attack-Resilient Control}

Define the following speed regulation error
\begin{equation}
\label{eq22}
\varepsilon_i(t) = v_i(t) - v_d,\quad i\in\{A,B\}.
\end{equation}

To address the EU-FDI attacks in \eqref{eq5}, the attack-resilient control framework is designed as follows:
\begin{align}
\label{eq23}
u_i(t)&= u_i^s(t) - \hat{\gamma}_i(t),
\\\label{eq24}
\hat{\gamma}_i(t) &=\frac{\varepsilon_i(t)}{\left|\varepsilon_i(t)\right|+\exp(-c_i t^2)} \exp(\hat{\rho}_i(t)),
\\\label{eq25}
\dot{\hat{\rho}}_i(t) &= \alpha_i  \left|\varepsilon_i(t)\right|,\quad i\in\{A,B\},
\end{align}
where compensating signal \( \hat{\gamma}_i(t) \) is designed to counteract the detrimental impact of the attack \( \gamma_i(t) \), while the parameter \( \hat{\rho}_i(t) \) is adaptively adjusted according to \eqref{eq25}, with a positive constant \( \alpha_i > 0 \).

\begin{remark}
\eqref{eq24} may appear unconventional at first glance, but it is designed to capture the physical intuition of adaptive compensation under adversarial conditions. $\varepsilon_i(t)$ represents the speed regulation error, which is a measurable physical quantity indicating deviation from the desired velocity. The denominator $|\varepsilon_i(t)| + \exp(-c_i t^2)$ ensures that the compensating signal remains bounded and smoothly transitions over time, mimicking a saturation-like behavior that prevents excessive control effort during transient phases. The exponential term $\exp(\hat{\rho}_i(t))$ reflects the accumulated history of the error, similar to an integral action in classical control. It amplifies the compensating signal when persistent deviations are observed, allowing the controller to respond more aggressively to sustained attacks. This structure is inspired by adaptive control principles, where the system learns to counteract disturbances based on observed performance. Physically, the controller behaves like a dynamic filter that estimates and cancels out the adversarial input $\gamma_i(t)$ using only observable quantities. Therefore, \eqref{eq23}--\eqref{eq25} together constitute an attack-resilient control framework that mitigates the adverse effects of EU-FDI attacks on the acceleration inputs by generating an adaptively tuned compensation signal.
\end{remark}

Unlike prior EU-FDI analyses where the control law structure remains intact and only stability margins are affected, here the entire safety QP formulation would become infeasible without the proposed attack-resilient compensation.

\begin{theorem}
\label{thm:1}
Consider the CAV dynamics in \eqref{eq1} and the adaptive HDV model in \eqref{eq10} under EU-FDI attacks. 
Assume that the initial state lies in the safe set, the estimation error satisfies \eqref{eq18a}--\eqref{eq18b}, and the event-driven QP \eqref{eq20} remains feasible at each triggering instant. Then, the objectives in Problem \ref{prb:1} are achieved implementing the event-driven safe and resilient controller in \eqref{eq18a}-\eqref{eq21} and \eqref{eq23}-\eqref{eq25}.
\end{theorem}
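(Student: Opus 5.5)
The plan is to split the statement into its three objectives and treat the safety part (objective 2) and the resilience part (objective 3) separately. The coupling between them runs only through the acceleration channel. Objective 1 then follows because $u_i^s$ is, by construction, the minimizer of the QP \eqref{eq20}, whose cost and CLF relaxations \eqref{eq16} encode the running and terminal costs in \eqref{eq8}.

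\textbf{Step 1: resilience.} Substituting \eqref{eq23} into \eqref{eq5} gives the closed-loop speed error dynamics
$$\dot\varepsilon_i=u_i^s+\gamma_i-\hat\gamma_i .$$
I would take the Lyapunov candidate
$$W_i=\tfrac12\varepsilon_i^2+\tfrac{1}{2\alpha_i}\tilde\rho_i^2,\qquad \tilde\rho_i=\hat\rho_i-\rho_i^*,$$
where $\rho_i^*$ is a slowly varying reference chosen with $\exp(\rho_i^*)\ge \bar\eta\exp(\bar\kappa t)+|u_i^s|$ on the maneuver horizon. This choice is possible because $u_i^s$ is bounded by \eqref{eq2} and $\gamma_i$ satisfies Assumption~\ref{ass:1}.

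Differentiating, the term $\varepsilon_i(u_i^s+\gamma_i)$ is bounded by $|\varepsilon_i|\exp(\rho_i^*)$. Using \eqref{eq24}, the compensation term satisfies
$$-\varepsilon_i\hat\gamma_i=-\frac{\varepsilon_i^2}{|\varepsilon_i|+e^{-c_it^2}}\,e^{\hat\rho_i}\le -|\varepsilon_i|e^{\hat\rho_i}+e^{-c_it^2}e^{\hat\rho_i}.$$
The adaptation law \eqref{eq25} contributes $\tilde\rho_i|\varepsilon_i|$, and convexity of $\exp$ gives $e^{\hat\rho_i}-e^{\rho_i^*}\ge e^{\rho_i^*}\tilde\rho_i$. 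Combining these, the dominant terms cancel up to a residual of order $e^{-c_it^2}e^{\hat\rho_i}$.

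\textbf{Main obstacle.} The difficulty is to show that $\hat\rho_i$ cannot outgrow the Gaussian decay, so that this residual is integrable and tends to zero. I would first argue by contradiction. If $|\varepsilon_i|$ stayed above a threshold, then $\hat\rho_i$ would grow linearly. In that case the compensation $e^{\hat\rho_i}$ eventually dominates $|\gamma_i|\le\bar\eta e^{\bar\kappa t}$ once $\alpha_i|\varepsilon_i|>\bar\kappa$, which forces $\varepsilon_i$ back toward zero. Hence $\hat\rho_i$ grows at most like $\bar\kappa t+$const, and $e^{-c_it^2}e^{\hat\rho_i}\to0$. The conclusion $\dot W_i\le -k|\varepsilon_i|+\text{vanishing term}$ then gives UUB of $\varepsilon_i$ in the sense of Definition~\ref{def:1}.

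\textbf{Step 2: safety.} With $\bar u_i=u_i^s+(\gamma_i-\hat\gamma_i)$, the mismatch $\gamma_i-\hat\gamma_i$ is exactly what Step 1 controls. Outside a small neighborhood it has a sign that drives $\varepsilon_i$ toward zero, and otherwise it remains bounded.

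I would absorb this bounded mismatch into the robust margin of \eqref{eq19}. That margin is computed as a minimum over the reachable set $R(t_k)$, which is defined through the triggering bounds $\boldsymbol{s}_i$ and $\boldsymbol{s}_H$ in \eqref{eq21}. By Lemma~\ref{lem:2}, QP feasibility at each $t_k$ (assumed) together with \eqref{eq18a}--\eqref{eq18b} gives
$$\dot b_n+\alpha_n(b_n)\ge0 \quad \text{on } [t_k,t_{k+1}).$$
Through \eqref{eq11}--\eqref{eq12} this applies to the true HDV state as well. Lemma~\ref{lem:1} then yields forward invariance of each $\{b_n\ge0\}$, including \eqref{eq4}, for all $t$ given a safe initial state.

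Finally, the CLF constraints \eqref{eq16} with the relaxations $\delta_j$ drive $y_B\to l$ and $v_i\to v_d$, which delivers the terminal-cost part of objective 1.
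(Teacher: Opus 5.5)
The Lyapunov function in your Step 1 does not close. With $W_i=\tfrac12\varepsilon_i^2+\tfrac{1}{2\alpha_i}\tilde\rho_i^2$, you combine three terms: the disturbance bound $|\varepsilon_i|e^{\rho_i^*}$, the compensation $-|\varepsilon_i|e^{\hat\rho_i}$, and the adaptation term $\tilde\rho_i|\varepsilon_i|$. They add up exactly to $|\varepsilon_i|\,(e^{\hat\rho_i}(e^{s}-1)-s)$ with $s=\rho_i^*-\hat\rho_i$. This is strictly positive whenever $0\le\hat\rho_i<\rho_i^*$, which is the generic initial regime since $\hat\rho_i(0)=0$. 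The quadratic weight on $\tilde\rho_i$ does not match the exponential gain (they differ by the factor $e^{\rho_i^*}$), so nothing cancels.

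Moreover, $\rho_i^*$ cannot be slowly varying. The requirement $e^{\rho_i^*}\ge\bar\eta e^{\bar\kappa t}$ forces $\dot\rho_i^*\approx\bar\kappa$, and the term $-\tilde\rho_i\dot\rho_i^*/\alpha_i$ you dropped is again positive when $\tilde\rho_i<0$. Since $\dot{\tilde\rho}_i=\alpha_i|\varepsilon_i|-\dot\rho_i^*$, the favorable regime $\tilde\rho_i\ge0$ is also not invariant. Hence no bound $\dot W_i\le-k|\varepsilon_i|+o(1)$ follows.

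What actually carries objective~3 in your write-up is the contradiction paragraph, which is the paper's argument. The paper runs it on $V=\varepsilon_i^2$ alone and handles $2\varepsilon_iu_i^s$ through the CLF inequality \eqref{eq16} rather than a bound on $|u_i^s|$. Your \emph{main obstacle} is self-inflicted. Keeping the common denominator $|\varepsilon_i|+e^{-c_it^2}$ as in \eqref{eq29} leaves the numerator $|\varepsilon_i|(|\gamma_i|-e^{\hat\rho_i})+e^{-c_it^2}|\gamma_i|$, whose residual vanishes by Assumption~\ref{ass:1} alone. Your split instead produces $e^{-c_it^2}e^{\hat\rho_i}$. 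It then needs the unproved claim that $\hat\rho_i$ grows at most like $\bar\kappa t$, which is essentially circular since $\hat\rho_i=\alpha_i\int_0^t|\varepsilon_i|$.

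The more serious gap is Step 2. The paper does not argue this at all; it takes objectives 1--2 directly from Lemmas~\ref{lem:1}--\ref{lem:2}. You are right that the acceleration mismatch must be addressed, but absorbing it into \eqref{eq19} does not work. The minimum over $R(t_k)$ only covers state drift within the trigger thresholds $\boldsymbol{s}_i,\boldsymbol{s}_H$ while the input is held at $\boldsymbol{\xi}(t_k)$. Under \eqref{eq5} and \eqref{eq23}, the true $\dot b_n$ contains the extra term $(\partial b_n/\partial v_i)(\gamma_i-\hat\gamma_i)$, for which \eqref{eq19} has no margin.

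Step 1 bounds $\varepsilon_i$, not $\gamma_i-\hat\gamma_i$, and gives no pointwise bound on the latter (both signals are unbounded). Its sign pushes $v_i$ toward $v_d$, which is unrelated to the sign of $\partial b_n/\partial v_i$; for example, \eqref{eq4} may call for braking while $v_i<v_d$. Worse, consider the regime where the mismatch is bounded. Once $e^{\hat\rho_i}>|u_i^s+\gamma_i|$, \eqref{eq24} acts as a high-gain term on $\varepsilon_i$, and in its boundary layer $\hat\gamma_i\approx u_i^s+\gamma_i$. There the mismatch is approximately $-u_i^s$, so the compensator cancels the QP's own acceleration command.

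Safety therefore cannot be inferred from feasibility of \eqref{eq20}. You would need an explicit bound $|\gamma_i-\hat\gamma_i|\le\bar d_i$ with CBF constraints tightened by $|\partial b_n/\partial v_i|\,\bar d_i$, or a structural argument that the compensator never opposes the CBF. The same mismatch affects objective~1. The applied input is $u_i=u_i^s-\hat\gamma_i$, so optimality of $u_i^s$ in \eqref{eq20} says nothing about the cost in \eqref{eq8}. Relaxed CLF constraints \eqref{eq16} also do not force $y_B\to l$ or $v_i\to v_d$.
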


\begin{proof}
Based on Lemma~\ref{lem:1} and Lemma~\ref{lem:2}, the CAV achieves optimal trajectory tracking by minimizing control energy and terminal deviation from the desired lane position and speed, while the state $x_i(t)$ remains within a time-varying elliptical safe region for all $t \geq 0$. This safety is enforced through a QP that integrates CBFs and CLFs under an event-driven framework. Hence, the first two objectives of the SAR control problem are achieved. Next, we prove that the speed regulation error $\varepsilon_i(t)$ for each CAV is UUB. A Lyapunov function candidate for each CAV $i \in \{A,B\}$ is defined as $V(\varepsilon_i(t)) = (v_i(t) - v_d)^2$. Then, one has
\begin{equation}
\label{eq27}
\begin{aligned}
\dot{V}(\varepsilon_i(t)) =& 2(v_i(t) - v_d)\dot{v}_i(t)
\\=& 2 \varepsilon_i(t)  \bar{u}_i(t)
\\=& 2 \varepsilon_i (t) \left( u_i^s(t) - \hat{\gamma}_i(t) + \gamma_i(t) \right)
\end{aligned}
\end{equation}
For the velocity-tracking CLF defined as $V_i(\varepsilon_i)=\varepsilon_i^2(t)$, one has $L_f V_i = 0,$ and $L_g V_i = 2\varepsilon_i(t).$ Substituting them into the CLF condition \eqref{eq16} yields $2 \varepsilon_i(t) u_i^s(t) \leq -c_3 \varepsilon_i^2(t) + \delta_i(t).$ Then, choosing $\left|\varepsilon_i(t)\right|\geq\sqrt{\delta_i(t)/c_3}$, one has
\begin{flalign}
\label{eq29}
\begin{aligned}
&\dot{V}(\varepsilon_i(t)) 
= 2 \varepsilon_i(t)  u_i^s(t) - 2 \varepsilon_i(t)  \hat{\gamma}_i(t) + 2 \varepsilon_i(t)  \gamma_i(t)
\\\leq& -c_3 \varepsilon_i^2(t) + \delta_i(t) - \frac{2 |\varepsilon_i(t)|^2}{|\varepsilon_i(t)| + \exp(-c_i t^2)}  \exp(\hat{\rho}_i(t)) 
\\&+ 2 |\varepsilon_i(t)| |\gamma_i(t)|
\\\leq &2|\varepsilon_i(t)||\gamma_i(t)| - \frac{2|\varepsilon_i(t)|^2}{|\varepsilon_i(t)| + \exp(-c_i t^2)}  \exp(\hat{\rho}_i(t))
\\\leq&2 \left( |\varepsilon_i(t)|\right.
\\\times&\left.\left(\frac{|\varepsilon_i(t)||\gamma_i(t)| +\exp(-c_i t^2)|\gamma_i(t)| - |\varepsilon_i(t)| \exp(\hat{\rho}_i(t))}{|\varepsilon_i(t)| + \exp(-c_i t^2)} \right) \right)
\end{aligned}&&\raisetag{5\baselineskip}
\end{flalign}

By using the integral solution for differential equations in \eqref{eq25}, one has $\exp\left(\hat{\rho}_i(t)\right) = \exp\left(\hat{\rho}_i(0)\right) \exp\left(\alpha_i \int_0^\top \left| \varepsilon_i(\tau) \right| \mathrm{d} \tau\right)$. Define the compact set $\Upsilon_i \triangleq \{\left|\varepsilon_i(t)\right| \leq \kappa_i\}$. Pick $\hat{\rho}_i(0) = 0$ so that $\exp\left(\hat{\rho}_i(0)\right) = 1$. Then, outside the compact set $\Upsilon_i$, one has that there exists a time $t_1$ such that for all $t \geq t_1$, $|\varepsilon_i(t)| |\gamma_i(t)| - |\varepsilon_i(t)| \exp(\hat{\rho}_i(t)) \leq 0$. Considering Assumption \ref{ass:1}, one has $\lim_{t \to \infty} \exp(-c_i t^2) |\gamma_i(t)| = 0$. Hence, one obtains that outside the compact set $\Upsilon_i$, for all $t \geq t_1$, $\dot{V}(\varepsilon_i(t))  \leq 0$. Then, by LaSalle’s invariance principle \cite{lasalle1960some}, one has that $\varepsilon_i(t)$ is UUB. This completes the proof.
\end{proof}

\section{SIMULATION RESULTS} 

This section presents simulation results that demonstrate the effectiveness of the proposed framework in enabling each CAV to perform optimal lane-changing maneuvers with safety guarantees under EU-FDI attacks, even in the presence of HDVs whose dynamics are unknown to the CAVs. The selected lane-changing scenario is intentionally designed as a worst-case mixed-traffic interaction, where safety constraints, human uncertainty, and adversarial actuator corruption coexist. Unlike many existing works that assume cooperative or predictable surrounding vehicles, the considered scenario includes an HDV whose behavior is unknown and cannot be influenced by the controller, thereby introducing significant uncertainty into the traffic environment. Under EU-FDI attacks that corrupt the longitudinal dynamics of the CAVs, this unpredictability prevents reliance on structured interactions or cooperative maneuvers for safety recovery. Consequently, the scenario forms a highly adversarial test condition that highlights the necessity of the proposed EDSR framework in realistic mixed-traffic environments. The simulation setting corresponds to the mixed-traffic lane-changing scenario shown in Figure~\ref{fig:1} and Figure~\ref{fig:2}. Vehicle $U$ is assumed to travel at a constant speed of $v_{U}(t)=20~\mathrm{m/s}$ throughout the simulation, although this assumption is not required by the proposed control framework. The dynamics of vehicles $A$ and $B$ follow the model described in \eqref{eq1}. The allowable speed range for the CAVs is specified as $v_i(t)\in[15,35]~\mathrm{m/s}$, and their acceleration and steering inputs are constrained within $\boldsymbol{\xi}_i(t)\in[(-7,-\pi/4),(3.3,\pi/4)]~\mathrm{m/s}^2$. The desired speed $v_d$ is set to $30~\mathrm{m/s}$ to reflect the traffic flow velocity. To ensure safety, elliptical safe regions defined in \eqref{eq4} are employed, where the ellipse parameters are set as $a_A=a_B=0.6$ to represent the reaction time of the CAVs, and $b_A=b_B=0.1$ to approximate the lane width $l=4~\mathrm{m}$ via the minor axis. The maximum allowable time for completing the lane-changing maneuver is set to $T_f=15~\mathrm{s}$. The real dynamics of the HDV $H$, which are unknown to the controller, are given by \begin{equation*} \begin{aligned} \begin{bmatrix}\dot{x}_{H}(t) \\ \dot{y}_{H}(t) \\ \dot{\theta}_{H}(t) \\ \dot{v}_{H}(t)\end{bmatrix}=&\begin{bmatrix}v_{H}(t) \cos \theta_{H}(t) \\ v_{H}(t) \sin \theta_{H}(t) \\ 0 \\ 0\end{bmatrix}+\begin{bmatrix}0 & -v_{H}(t) \sin \theta_{H}(t) \\ 0 & v_{H}(t) \cos \theta_{H}(t) \\ 0 & v_{H}(t) / L_{w} \\ 1 & 0\end{bmatrix} \\&\times\begin{bmatrix}u_{H}(t) \\ \phi_{H}(t)\end{bmatrix} +\begin{bmatrix}\varepsilon_{1}(t) \\ \varepsilon_{2}(t) \\ \varepsilon_{3}(t) \\ \varepsilon_{4}(t)\end{bmatrix}, \end{aligned} \end{equation*} where $u_H(t)$ and $\phi_H(t)$ represent either a random policy or manual control by a human driver input, and the disturbances $\varepsilon_1(t)\in[-0.7,0.7]$, $\varepsilon_2(t)\in[-0.5,0.5]$, $\varepsilon_3(t)\in[-0.5,0.5]$, and $\varepsilon_4(t)\in[-0.7,0.7]$ are modeled as uniformly distributed random variables. The initial states of the vehicles at time $t_0=0$ are given as $\boldsymbol{x}_{A}(t_0)=[50~\mathrm{m},4~\mathrm{m},0~\mathrm{rad},29~\mathrm{m/s}]^\top$, $\boldsymbol{x}_{B}(t_0)=[20~\mathrm{m},0~\mathrm{m},0~\mathrm{rad},25~\mathrm{m/s}]^\top$, $\boldsymbol{x}_{H}(t_0)=[10~\mathrm{m},4~\mathrm{m},0~\mathrm{rad},28~\mathrm{m/s}]^\top$, and $\boldsymbol{x}_{U}(t_0)=[60~\mathrm{m},0~\mathrm{m},0~\mathrm{rad},20~\mathrm{m/s}]^\top$. The weights in the objective function \eqref{eq20} are selected as $\alpha_{u_A^s}=\alpha_{u_B^s}=1$, $p_1=p_2=p_4=1$, and $p_3=100$. The initial state of the adaptive HDV model \eqref{eq14} is set to match the true HDV state, i.e., $\overline{\boldsymbol{x}}_H(t_0)=\boldsymbol{x}_H(t_0)$, and the adaptive terms are initialized as $h_x(t_0)=h_y(t_0)=h_\theta(t_0)=h_v(t_0)=0$. The parameters $s_i$ in Lemma~\ref{lem:2} are set as $s_i=[0.01\,\mathrm{m},0.005\,\mathrm{m},0.01\,\mathrm{rad},1\,\mathrm{m/s}]^\top$ for all vehicles $i\in\{A,B,H,U\}$. The corresponding bounds for the state error $\boldsymbol{e}(t)$ and its derivative $\dot{\boldsymbol{e}}(t)$ are $\boldsymbol{w}=[0.2\,\mathrm{m},0.1\,\mathrm{m},0.1\,\mathrm{rad},1\,\mathrm{m/s}]^\top$ and $\boldsymbol{\nu}=[0.5\,\mathrm{m/s},0.2\,\mathrm{m/s},0.1\,\mathrm{rad/s},1\,\mathrm{m/s^2}]^\top$, respectively. Additionally, the allowable error threshold to terminate the maneuver is set as $\sigma=0.3\,\mathrm{m}$. 
\begin{figure}[H] \centering \begin{subfigure}[b]{0.2\textwidth} \includegraphics[width=\linewidth]{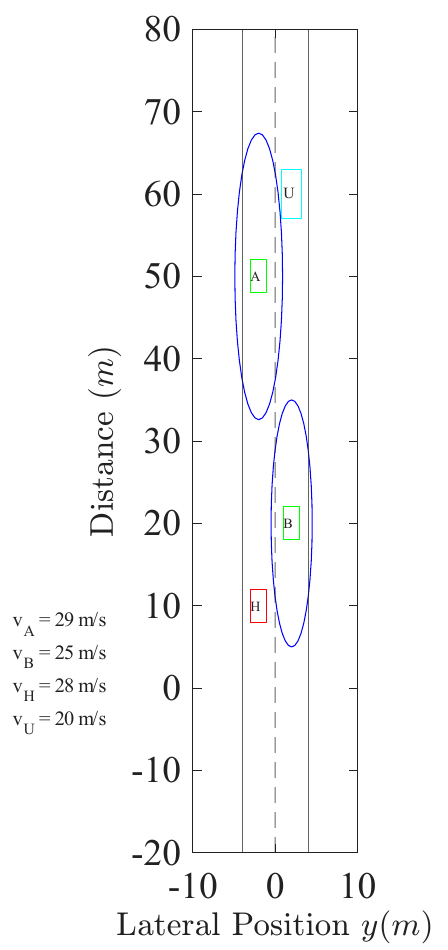} \caption{$t=t_0$.} \label{fig:2a} \end{subfigure} \begin{subfigure}[b]{0.2\textwidth} \includegraphics[width=\linewidth]{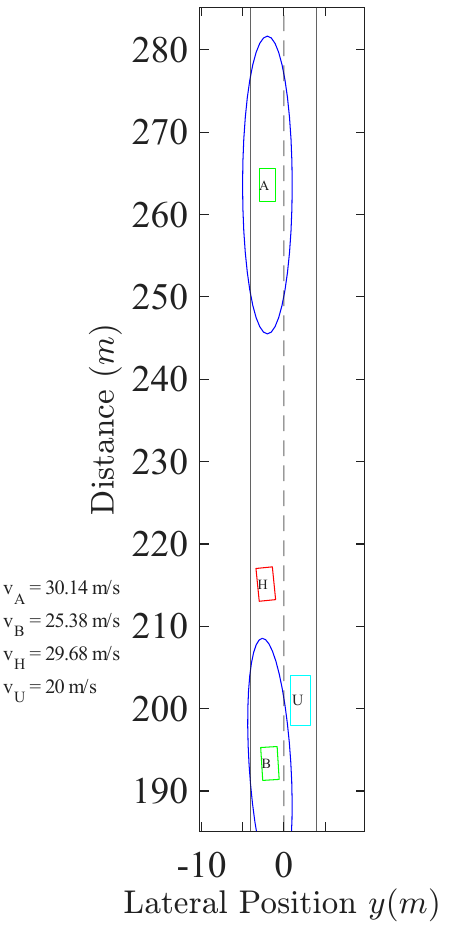} \caption{$t=t_f$.} \label{fig:2b} \end{subfigure} \caption{Lane-changing maneuver under EU-FDI attacks using the proposed EDSR framework.} \label{fig:2} \end{figure}

\subsection{Performance of the Proposed EDSR Framework under EU-FDI
}

Based on the above settings, we implement the proposed EDSR control framework in solving CBF-based QPs \eqref{eq20} with unknown HDV dynamics under EU-FDI attacks. Set the discretized time interval $T_s=0.05 \mathrm{~s}$. The HDV policy is set to be random, satisfying $u_{H}\left(t_{k}\right) \in[-1.7,1.7] \mathrm{m} / \mathrm{s}^{2}, \phi_{H}\left(t_{k}\right) \in$ $[-0.2 \pi, 0.2 \pi] \mathrm{rad},\quad k=0,1,2, \ldots$. As illustrated in Figure~\ref{fig:2}, the left configuration in Figure \ref{fig:2a} corresponds to the initial configuration at $t=t_0$, while the right configuration shows the final configuration in Figure \ref{fig:2b} at $t=t_f$ after CAV $B$ successfully completes the lane-changing maneuver under the proposed EDSR framework. These snapshots provide a qualitative overview of the maneuver evolution before presenting the quantitative performance results.

As shown in Figure~\ref{fig:3}, the EU-FDI attack signal grows unbounded over time, illustrating the exponential escalation of adversarial input injected into the system. The EU-FDI attacks are as follows
\begin{subequations}
\begin{align}
\label{eq30a}
\gamma_A(t)=2\exp(\kappa t)  \sin(5t)
\\\label{eq30b}\gamma_B(t)=5\exp(\kappa t)   \cos(5 t)
\end{align}
\end{subequations}

\begin{figure}[H]
\centering
\includegraphics[width =0.46 \textwidth]{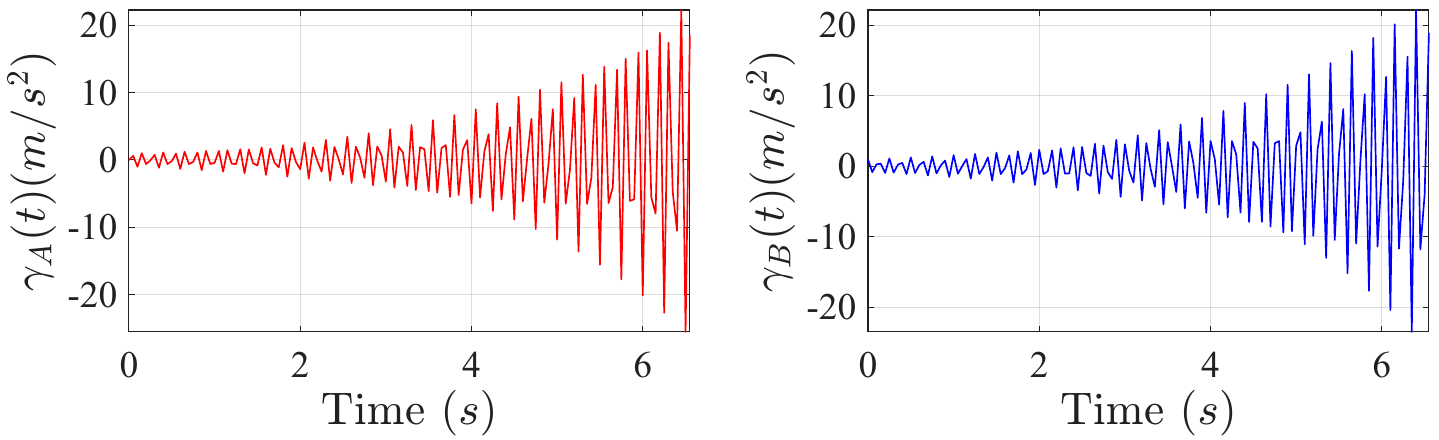}
\caption{EU-FDI Attack Signal.
}
\label{fig:3}
\end{figure}

The simulation results are shown in Figs.~\ref{fig:4}--\ref{fig:6}. 
Figure~\ref{fig:4} demonstrates that CAV $B$ successfully completes the lane-change maneuver within approximately $6.5$~s under EU-FDI attacks, while both CAVs maintain smooth velocity profiles without excessive acceleration or braking. Figure~\ref{fig:5} presents the safety constraint values $b_{i,j}$ for all interacting vehicle pairs. All curves remain above zero throughout the maneuver, confirming that the proposed EDSR framework preserves safety and maintains forward invariance of the safe set despite adversarial disturbances. Figure~\ref{fig:6} shows the speed tracking errors $\varepsilon_i(t)$ of CAVs $A$ and $B$. The errors remain smooth and bounded throughout the maneuver, indicating stable velocity regulation and effective mitigation of EU-FDI attacks.

\begin{figure}[H]
\centering
\includegraphics[width =0.46\textwidth]{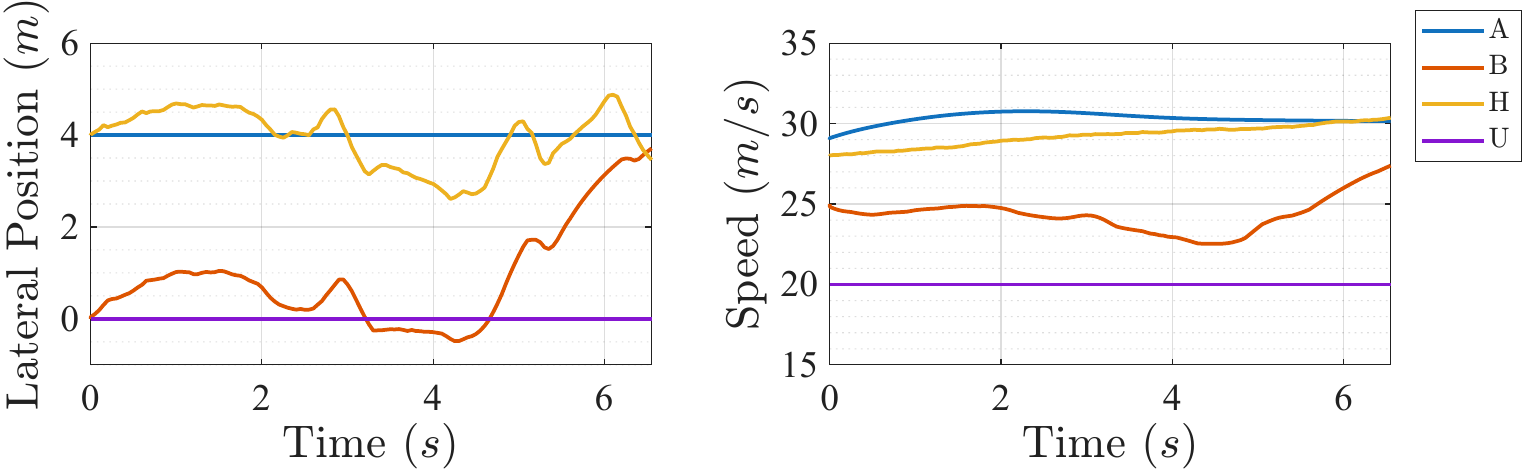}
\caption{Lateral position and speed under EU-FDI attacks using the proposed EDSR control framework.
}
\label{fig:4}
\end{figure}

\begin{figure}[H]
\centering
\includegraphics[width =0.46\textwidth]{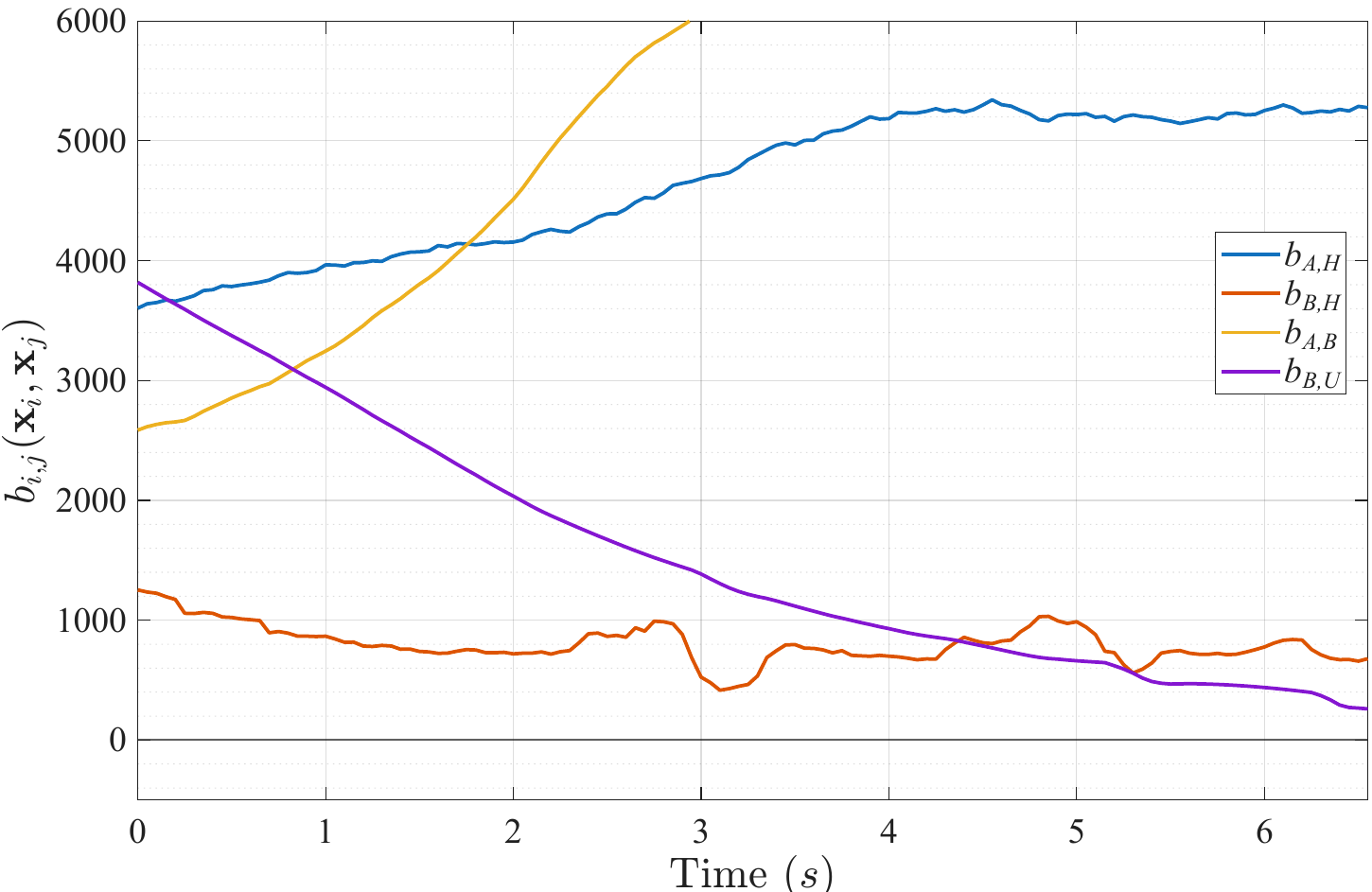}
\caption{Safety performance under EU-FDI attacks using the proposed EDSR control framework.
} 
\label{fig:5}
\end{figure}

\begin{figure}[H]
\centering
\includegraphics[width =0.46\textwidth]{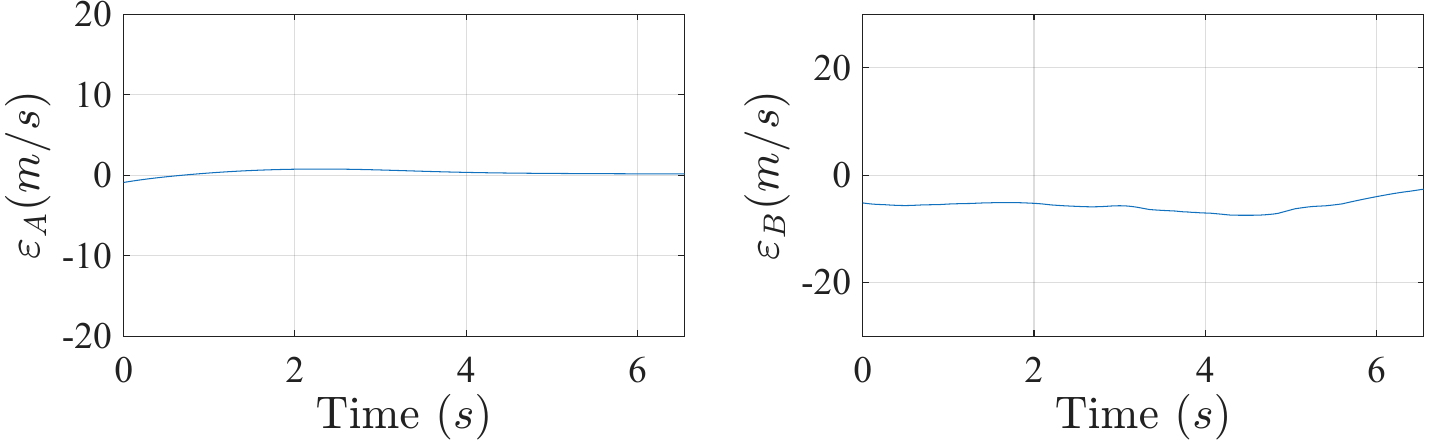}
\caption{Tracking error under EU-FDI attacks using the proposed EDSR control framework.
} 
\label{fig:6}
\end{figure}
\subsection{Comparative Study with Event-Driven CBF in \cite{xiao2022event}}

To evaluate the limitations of existing methods, we compare our EDSR framework with the event-driven approach from \cite{xiao2022event}, which lacks attack-resilient compensation. For fair comparison, the same EU-FDI attack formulation defined in \eqref{eq30a} and \eqref{eq30b} is applied. Figure~\ref{fig:7} shows the corresponding attack signals, and simulation results of the event-driven approach in \cite{xiao2022event} are presented in Figures~\ref{fig:9}--\ref{fig:11}.

\begin{figure}[H]
\centering
\includegraphics[width =0.46 \textwidth]{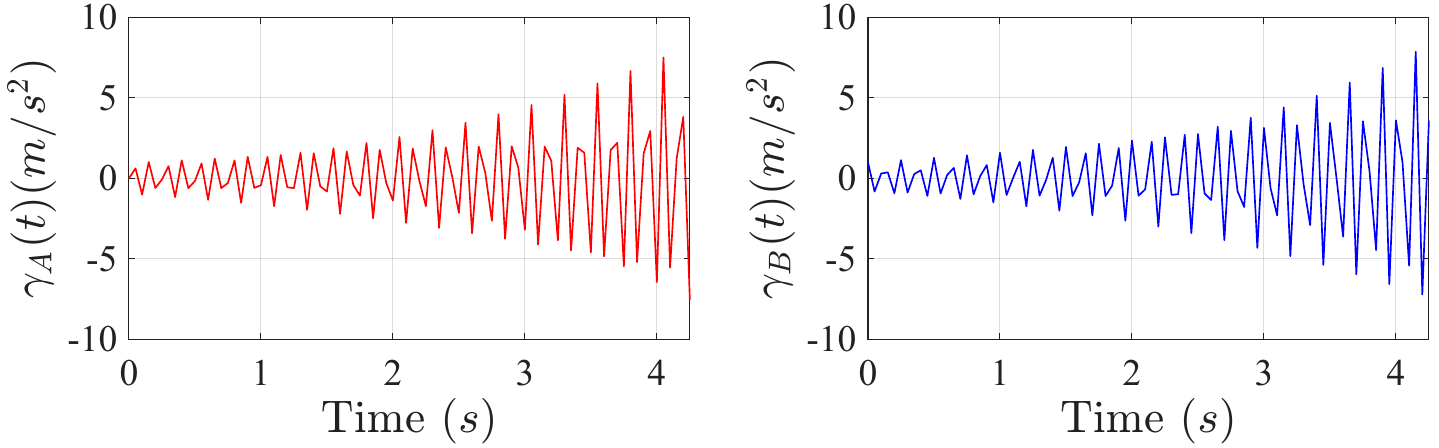}
\caption{EU-FDI Attack Signal.
}
\label{fig:7}
\end{figure}

\begin{figure}[H]
    \centering
    \begin{subfigure}[b]{0.2\textwidth}
        \includegraphics[width=\linewidth]{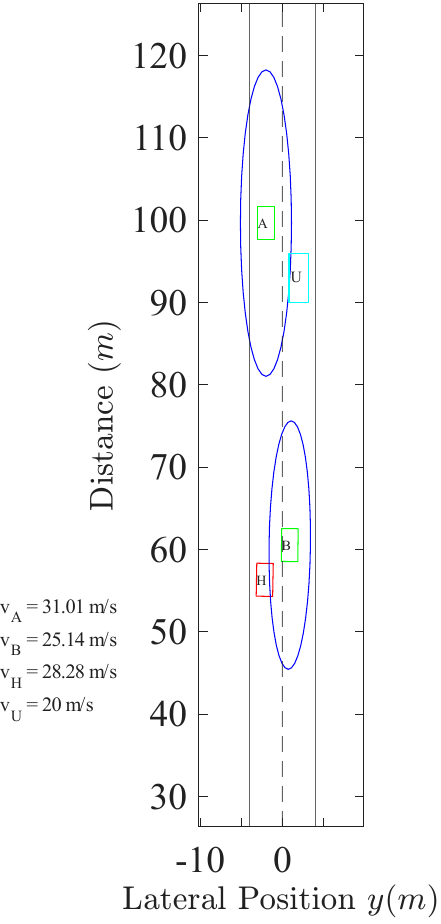}
        \caption{$t=1.50s$.}
        \label{fig:8a}
    \end{subfigure}
    \begin{subfigure}[b]{0.2\textwidth}
        \includegraphics[width=\linewidth]{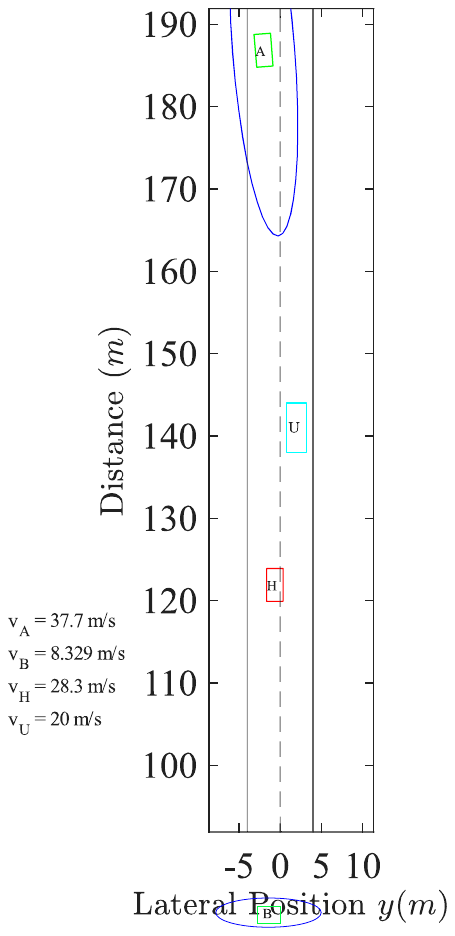}
        \caption{$t=3.69s$.}
        \label{fig:8b}
    \end{subfigure}
    \caption{Lane-changing maneuver under EU-FDI attacks using the event-driven approach in \cite{xiao2022event}.}
    \label{fig:8}
\end{figure}

Under this unified attack setting, Figure~\ref{fig:8} presents representative snapshots of the lane-changing process under the conventional event-driven method at $t=1.50$~s and $t=3.69$~s. 
The left configuration in Figure~\ref{fig:8a} ($t=1.50$~s) shows that the HDV has already intruded into the safety region of CAV $B$, indicating a violation of the safety constraint $b_{B,H}$, while CAV $B$ fails to initiate effective avoidance under EU-FDI attacks. 
The right configuration in Figure~\ref{fig:8b} ($t=3.69$~s) further reveals that CAV $B$ deviates significantly from the desired lane-changing trajectory, leading to degraded maneuver effectiveness and excessive fallback. 
These snapshots demonstrate progressive safety degradation in the absence of attack-resilient compensation. Figure~\ref{fig:9} shows the lateral positions and speeds of the vehicles. Without resilience to adversarial inputs, small HDV deviations induce drastic maneuvers and severe deceleration in CAV $B$, while CAV $A$ exhibits rapid and uncontrolled acceleration. The disturbance propagates through the traffic interaction, and CAV $B$ loses maneuver feasibility, resulting in premature termination at approximately $t=4.28$~s. This confirms that unmitigated EU-FDI attacks can rapidly drive the system toward unsafe behaviors. Figure~\ref{fig:10} shows the tracking errors $\varepsilon_i(t)$. Under escalating EU-FDI attacks, CAV $A$ shows sharp positive errors (up to about $+15~\mathrm{m/s}$), reflecting unwarranted acceleration, while CAV $B$ exhibits severe negative errors (minimum about $-30~\mathrm{m/s}$), indicating prolonged excessive deceleration. These large error magnitudes confirm that the conventional event-driven method fails to maintain stable speed tracking and produces highly unbalanced vehicle behaviors. Figure~\ref{fig:11} presents the CBF safety metrics $b_{i,j}$. The most critical degradation occurs in $b_{B,H}$, which drops below zero between approximately $t=2.15$~s and $t=2.60$~s and reaches about $-140$, clearly indicating loss of  forward invariance. Overall, the conventional event-driven CBFs method attempts to preserve safety through extreme acceleration or braking under attacks, which results in impractical trajectories and degraded traffic stability, thereby highlighting the necessity of the integrated resilience provided by the proposed EDSR framework.

\begin{figure}[h]
\centering
\includegraphics[width =0.46\textwidth]{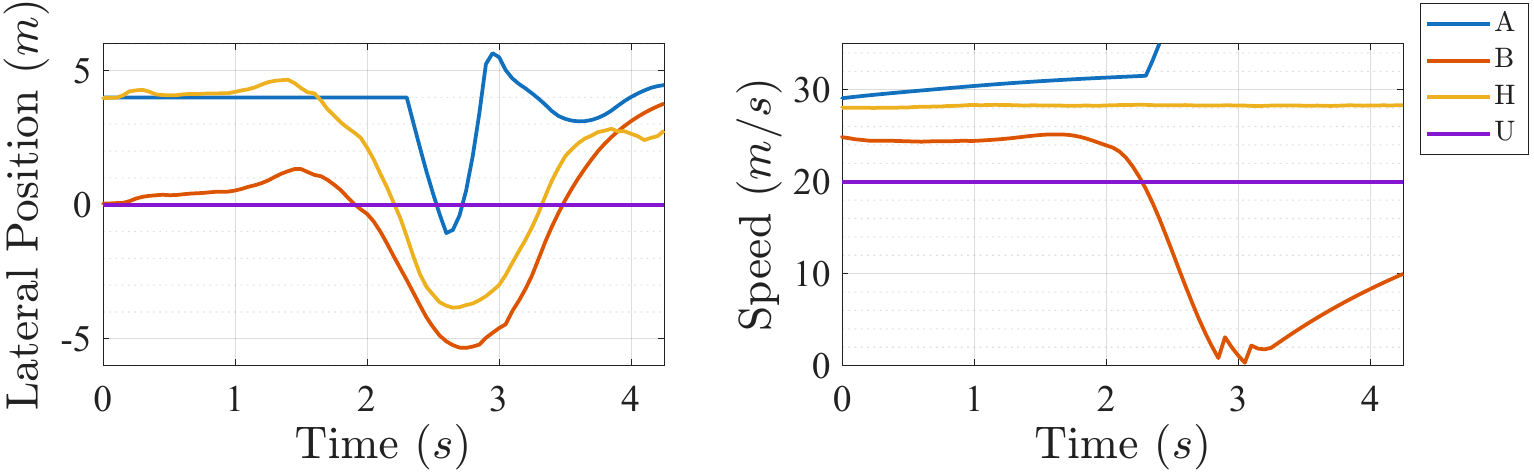}
\caption{Lateral position and speed under EU-FDI attacks using the event-driven approach in \cite{xiao2022event}.
}
\label{fig:9}
\end{figure}

\begin{figure}[h]
\centering
\includegraphics[width =0.46\textwidth]{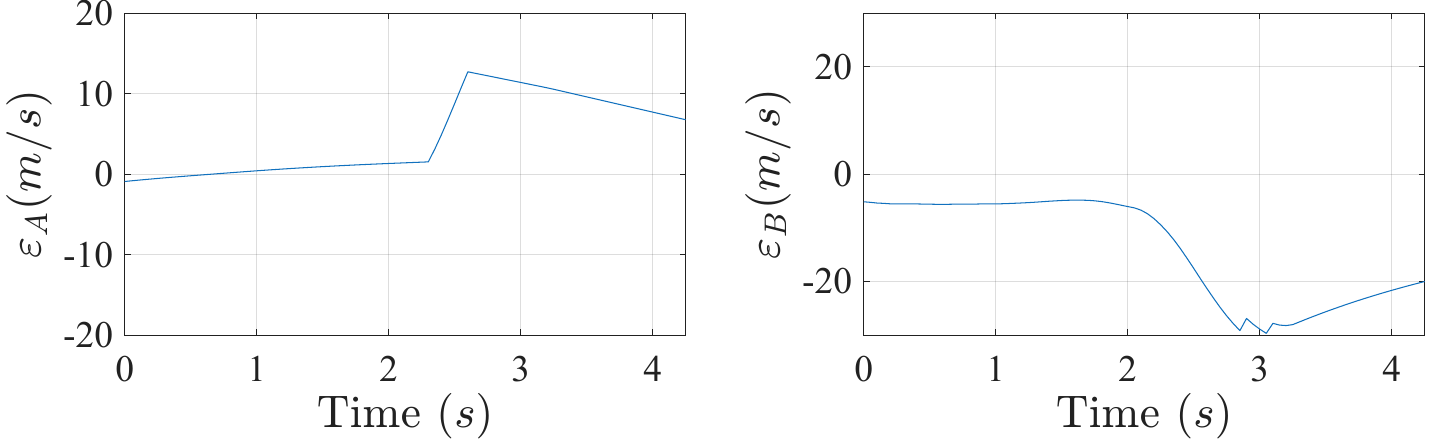}
\caption{Tracking error under EU-FDI attacks using the event-driven approach in \cite{xiao2022event}.
} 
\label{fig:10}
\end{figure}

\begin{figure}[h]
\centering
\includegraphics[width =0.46\textwidth]{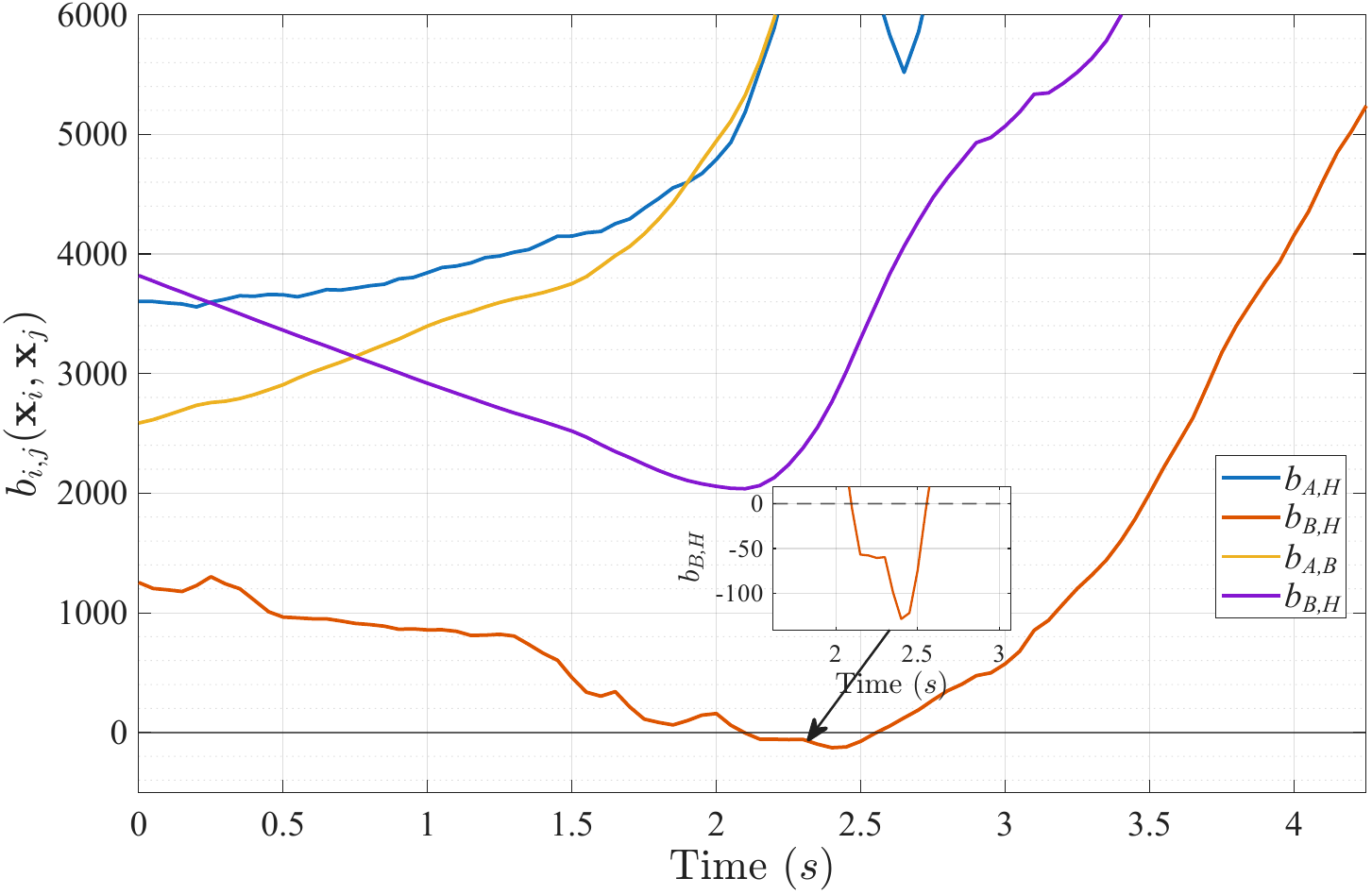}
\caption{Safety performance under EU-FDI attacks using the conventional event-driven approach in \cite{xiao2022event}.
} 
\label{fig:11}
\end{figure}

In this study, the simulation environment was designed to capture key dynamics of mixed traffic scenarios involving CAVs and HDVs under adversarial conditions. To improve realism, future work will incorporate more detailed vehicle models, including actuator dynamics, sensor noise, and communication delays. Additionally, we plan to integrate real-world traffic datasets and human driver behavior models to better reflect the variability and unpredictability of HDVs. These enhancements will further validate the robustness and applicability of the proposed control framework in practical deployment settings.

\section{Conclusion}
This paper formulates and addresses the SAR control problem for CAVs interacting with HDVs in mixed traffic environments. To jointly ensure safety and attack resilience under adversarial conditions, a novel EDSR control framework has been proposed that integrates event-driven CBFs and CLFs with data-driven estimation of HDV behaviors. This approach maintains provable collision avoidance guarantees for safe lane-changing maneuvers even in the presence of EU-FDI attacks, while also significantly reducing the computational burden through the event-driven design. Simulation results validate the effectiveness of the framework in ensuring real-time safety and attack-resilient control in uncertain and adversarial traffic scenarios.

\bibliographystyle{IEEEtran}
\bibliography{output}

\end{document}